\documentclass{article}

\usepackage[round,authoryear]{natbib}

\usepackage{amssymb,amsfonts,amsmath,amsthm}

\usepackage{graphicx}  
\usepackage{subfig}    
\usepackage{booktabs}  



\newcommand{\poi}{\mathrm{Poi}}
\newcommand{\rd}{\mathrm{\, d}}

\newcommand{\e}{\mathbb{E}}
\renewcommand{\Pr}{\mathbb{P}}

\newtheorem{theorem}{Theorem}

\newtheorem{lemma}{Lemma}
\newtheorem{corollary}{Corollary}

\begin{document}

\title{Correct ordering in the Zipf--Poisson ensemble}
\date{September 2010}
\author{Justin S. Dyer\\
Stanford University\\Stanford, CA, USA
\and
Art B. Owen\\
Stanford University\\Stanford, CA, USA
}

\maketitle

\begin{abstract}
We consider a Zipf--Poisson ensemble
in which $X_i\sim\poi(Ni^{-\alpha})$
for $\alpha>1$ and $N>0$ and integers $i\ge 1$. 
As $N\to\infty$ the first $n'(N)$
random variables have their proper order $X_1>X_2>\dots>X_{n'}$
relative to each other,
with probability tending to $1$
for $n'$ up to $(AN/\log(N))^{1/(\alpha+2)}$
for an explicit constant $A(\alpha)\ge 3/4$.
The rate $N^{1/(\alpha+2)}$ cannot be achieved.
The ordering of the first $n'(N)$ entities
does not preclude
$X_m>X_{n'}$ for some interloping $m>n'$. The first
$n''$ random variables are correctly ordered exclusive of any interlopers,
with probability tending to $1$
if $n''\le (BN/\log(N))^{1/(\alpha+2)}$ 
for $B<A$.
For a Zipf--Poisson model of the British National Corpus,
which has a total word count of
$100{,}000{,}000$, our result estimates that the
$72$ words with the highest counts
are properly ordered.
\end{abstract}


\section{Introduction}

Power law distributions are ubiquitous,
arising in studies of degree distributions
of large networks, book and music sales counts,
frequencies of words in literature and even baby names.
It is common that the relative frequency
of the $i$'th most popular term
falls off roughly as $i^{-\alpha}$ for
a constant $\alpha$ slightly larger than $1$.
This is the pattern made famous
by \cite{zipf:1949}. 
Data usually show some deviations from a
pure Zipf model. 
The Zipf--Mandelbrot law for which
the $i$'th frequency is proportional to
$(i+k)^{-\alpha}$ where $k\ge 0$ is often a much
better fit. That and many other models
are described in \citet[Chapter 9]{pope:2009}.

The usual methods for fitting long tailed
distributions assume an IID sample.
However, in many applications 
a persistent set of entities is
repeatedly sampled under slightly
different conditions.  For example, if
one gathers a large sample of English
text, the word `the' will be the most
frequent word with overwhelming probability.
No other word has such high probability
and repeated sampling will not give
either zero or two instances of such a
very popular word.
Similarly in Internet applications, the
most popular URLs in one sample are
likely to reappear in a similar sample,
taken shortly thereafter or in a closely
related stratum of users.  The movies
most likely to be rated at Netflix in
one week will, with only a few changes,
be the same popular movies the next week.

Because the entities themselves have a
meaning beyond our sample data,
it is natural to wonder whether they
are in the correct order in our sample.
The problem we address here is the
error in ranks estimated from count data.
By focussing on count data we are
excluding other long tailed data such
as the lengths of rivers.

In a large data set, the top few most popular
items are likely to be correctly identified
while the items that appeared only a handful
of times cannot be confidently ordered
from the sample. 
We are interested in drawing the line
between ranks that are well identified
and those that may be subject to sampling
fluctuations.
One of our motivating applications is a graphical
display in~\cite{dyer:owen:2010:tr:2}.
Using that display one is able to depict
a head to tail affinity for movie
ratings data:  the busiest raters are over represented
in the most obscure movies and conversely
the rare raters are over represented in ratings
of very frequently rated movies. Both effects
are concentrated in a small corner
of the diagram.
The graphic has greater importance
if it applies to a source generating
the data than if it applies only
to the data at hand.

This paper uses the Zipf law because it is
the simplest model for long tailed rank data
and we can use it to get precisely
stated asymptotic
results. If instead we are given another
distribution, then a numerical method
described in Section~\ref{sec:discussion}
is very convenient to apply.

If we  
suppose that the item counts are 
independent and Poisson
distributed with expectations that follow
a power law, then a precise answer is possible.
We define
the Zipf--Poisson ensemble to be an infinite
collection of independent random variables
$X_i\sim\poi(\lambda_i)$
where $\lambda_i = Ni^{-\alpha}$
for parameters $\alpha>1$ and $N>0$.
Our main results are summarized in Theorem~\ref{thm:zpok} below.

\begin{theorem}\label{thm:zpok}
Let $X_i$ be sampled from the Zipf--Poisson ensemble
with parameter $\alpha>1$.
If $n=n(N)\le (AN/\log(N))^{1/(\alpha+2)}$
for $A=\alpha^2(\alpha+2)/4$,
then
\begin{align}\label{eq:zpok1}
\lim_{N\to\infty}\Pr\bigl( X_1>X_2>\cdots>X_n\bigr)=1.
\end{align}
If $n=n(N)\le (BN/\log(N))^{1/(\alpha+2)}$ for $B < A$,
then
\begin{align}\label{eq:zpok2}
\lim_{N\to\infty}\Pr\bigl( X_1>X_2>\cdots>X_n>\max_{i > n}X_i\bigr)=1.
\end{align}
If $n=n(N)\ge CN^{1/(\alpha+2)}$ for any $C>0$, then
\begin{align}\label{eq:zpok3}
\lim_{N\to\infty}\Pr\bigl( X_1>X_2>\cdots>X_n\bigr)=0.
\end{align}
\end{theorem}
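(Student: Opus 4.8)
The plan is to reduce the whole theorem to tail estimates for a difference of two independent Poissons. The workhorse is a Chernoff bound: if $X\sim\poi(\mu)$ and $Y\sim\poi(\nu)$ are independent with $\mu>\nu$, then optimizing the Poisson moment generating functions over $t>0$ gives $\Pr(X\le Y)=\Pr(Y-X\ge0)\le\inf_{t>0}\e[e^{t(Y-X)}]=\inf_{t>0}\exp\bigl(\nu(e^{t}-1)+\mu(e^{-t}-1)\bigr)=\exp\bigl(-(\sqrt{\mu}-\sqrt{\nu})^{2}\bigr)$, the infimum being attained at $e^{t}=\sqrt{\mu/\nu}$. So the variance-stabilizing scale $\sqrt{\lambda_i}=\sqrt{N}\,i^{-\alpha/2}$ is the natural one, and by the mean value theorem $\sqrt{\lambda_i}-\sqrt{\lambda_{i+1}}=\sqrt{N}\int_i^{i+1}\tfrac{\alpha}{2}x^{-\alpha/2-1}\rd x$ lies between $\tfrac{\alpha}{2}\sqrt{N}(i+1)^{-\alpha/2-1}$ and $\tfrac{\alpha}{2}\sqrt{N}\,i^{-\alpha/2-1}$; in particular $(\sqrt{\lambda_i}-\sqrt{\lambda_{i+1}})^{2}\ge\tfrac{\alpha^{2}}{4}N(i+1)^{-\alpha-2}$, and likewise $(\sqrt{\lambda_n}-\sqrt{\lambda_i})^{2}\ge\tfrac{\alpha^{2}}{4}Ni^{-\alpha-2}(i-n)^{2}$ for $i>n$.

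For \eqref{eq:zpok1} I would union-bound over adjacent inversions: $1-\Pr(X_1>\cdots>X_n)\le\sum_{i=1}^{n-1}\Pr(X_i\le X_{i+1})\le\sum_{i=1}^{n-1}\exp\bigl(-(\sqrt{\lambda_i}-\sqrt{\lambda_{i+1}})^{2}\bigr)$. The consecutive differences $i^{-\alpha/2}-(i+1)^{-\alpha/2}$ decrease in $i$, so the summand increases in $i$ and the sum is at most $(n-1)\exp\bigl(-\tfrac{\alpha^{2}}{4}Nn^{-\alpha-2}\bigr)$. When $n^{\alpha+2}\le AN/\log N$ with $A=\alpha^{2}(\alpha+2)/4$ the exponent is at least $\tfrac{\alpha^{2}}{4A}\log N=\tfrac{\log N}{\alpha+2}$, so the bound is at most $n\,N^{-1/(\alpha+2)}\le A^{1/(\alpha+2)}(\log N)^{-1/(\alpha+2)}\to0$. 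This is exactly where the value of $A$ is forced: it is the constant making the exponential decay $N^{-\alpha^{2}/(4A)}$ cancel the growth $n\asymp N^{1/(\alpha+2)}$, leaving only the vanishing logarithmic factor.

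For \eqref{eq:zpok2} the adjacent-inversion sum is handled exactly as above, and since $B<A$ it now tends to $0$ at a polynomial rate, with room to spare; it remains to bound $\Pr\bigl(X_n\le\max_{i>n}X_i\bigr)$. I would split the interlopers at a large fixed $K=K(\alpha)$. For $n<i\le Kn$ the Chernoff bound gives $\Pr(X_i\ge X_n)\le\exp\bigl(-(\sqrt{\lambda_n}-\sqrt{\lambda_i})^{2}\bigr)\le\exp\bigl(-c_KNn^{-\alpha-2}(i-n)^{2}\bigr)$ with $c_K=\tfrac{\alpha^{2}}{4}K^{-\alpha-2}$, and summing this geometric-type series contributes $O(N^{-c_K/B})\to0$. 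For $i>Kn$ the one trick is to not union-bound the rare event that $X_n$ is small: from $\{X_n\le\max_{i>Kn}X_i\}\subseteq\{X_n<\lambda_n/2\}\cup\bigcup_{i>Kn}\{X_i\ge\lambda_n/2\}$ one gets $\Pr(X_n<\lambda_n/2)\le e^{-c\lambda_n}$ with $\lambda_n\asymp N^{2/(\alpha+2)}\to\infty$, while choosing $K$ so large that $e\lambda_i/(\lambda_n/2)\le\tfrac12$ for all $i>Kn$, the Poisson upper-tail bound $\Pr(\poi(\mu)\ge a)\le(e\mu/a)^{a}$ makes $\sum_{i>Kn}\Pr(X_i\ge\lambda_n/2)$ summable and exponentially small in $\lambda_n$.

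For \eqref{eq:zpok3} I would argue from below using independence of disjoint index pairs. Since $n\ge CN^{1/(\alpha+2)}$, for $N$ large the window $I=\bigl[c_1N^{1/(\alpha+2)},c_2N^{1/(\alpha+2)}\bigr]$, with $0<c_1<c_2<C$ fixed, lies inside $\{1,\dots,n-1\}$. For $i\in I$ we have $\lambda_i\asymp N^{2/(\alpha+2)}\to\infty$, while $\delta_i:=(\lambda_i-\lambda_{i+1})/\sqrt{\lambda_i+\lambda_{i+1}}\asymp\sqrt{Ni^{-\alpha-2}}$ stays bounded by a constant $\delta^{*}=\delta^{*}(\alpha,C)$; a quantitative normal approximation for the Skellam law (Berry--Esseen, error $O(\lambda_i^{-1/2})$) then gives $\Pr(X_i\le X_{i+1})\ge\Phi(-\delta^{*})-o(1)\ge p_0>0$ uniformly over $I$. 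Taking the $m\gtrsim N^{1/(\alpha+2)}$ disjoint pairs $\{i,i+1\}$ with $i$ even in $I$, the corresponding inversion events are independent, so $\Pr(X_1>\cdots>X_n)\le(1-p_0)^{m}\to0$. The calculus in all three parts is routine; the two places that need care are the interloper split in \eqref{eq:zpok2}, so the \emph{$X_n$ small} event is charged only once, and the uniform lower bound $\Pr(X_i\le X_{i+1})\ge p_0$ in \eqref{eq:zpok3}. I expect the latter to be the main obstacle, because it is the only step that cannot be pushed through with Chernoff bounds and union bounds and instead requires a genuine lower-tail (Berry--Esseen or local-CLT) estimate for a difference of Poissons.
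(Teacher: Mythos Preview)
Your argument is correct and for~\eqref{eq:zpok1} it coincides with the paper's: the same Chernoff bound for the Skellam law, the same union bound over adjacent pairs, and the same mean--value estimate leading to $n\exp(-\tfrac{\alpha^2}{4}Nn^{-\alpha-2})$, with the constant $A=\alpha^2(\alpha+2)/4$ arising for exactly the reason you describe.

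For~\eqref{eq:zpok2} and~\eqref{eq:zpok3} you reach the same conclusions by genuinely different routes. For the interloper bound the paper does not split at $Kn$; instead it introduces an intermediate index $n'\in\bigl[(BN/\log N)^{1/(\alpha+2)},(AN/\log N)^{1/(\alpha+2)}\bigr]$, handles $n<i\le n'$ by reusing the adjacent--ordering result already proved, and for $i>n'$ uses a single threshold $\tau=\sqrt{\lambda_{n}\lambda_{n'}}$ together with the Shorack--Wellner Poisson tail bound, an integral comparison, and Gautschi's inequality $\Gamma(\tau-1/\alpha)/\Gamma(\tau+1)<\tau^{-1/\alpha}/(\tau-1/\alpha)$ to control $\sum_{i>n'}\Pr(X_i>\tau)$; the lower tail $\Pr(X_n\le\tau)$ is handled by Chebyshev. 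Your approach is more elementary, needing only the Skellam Chernoff bound and the crude $(e\mu/a)^a$ Poisson tail, at the price of a worse (but still $o(1)$) rate $N^{-c_K/B}$ for the near interlopers. For~\eqref{eq:zpok3} the paper avoids a Berry--Esseen bound for the Skellam law by decoupling through the deterministic level $\lambda_i$: it uses Teicher's inequality $\Pr(X_i\le\lambda_i)\ge e^{-1}$ and then a Berry--Esseen bound for the single Poisson $X_{i+1}$ to get $\Pr(X_{i+1}\ge X_i)\ge\tfrac13\Phi(-\alpha A^{\alpha/2}/B^{\alpha+1})$. Your direct Berry--Esseen for $X_i-X_{i+1}$ is equally valid (and is in fact covered by the same compound--Poisson Berry--Esseen theorem the paper cites), and the final independence--over--disjoint--pairs step is identical. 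In short: same skeleton throughout, but your treatment of~\eqref{eq:zpok2} trades sharper gamma--function machinery for a more hands--on splitting, and your~\eqref{eq:zpok3} applies the normal approximation to the difference rather than to one summand at a time.
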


Equation~\eqref{eq:zpok1} states that the
top $n'=\lfloor (AN/\log(N))^{1/(\alpha+2)}\rfloor$
entities,
with $A=\alpha^2(\alpha+2)/4$,
are correctly ordered among themselves
with probabilty tending to $1$ as $N\to\infty$.
From $\alpha>1$ we have $A>3/4$.
Equation~\eqref{eq:zpok3} shows that
we cannot remove $\log(N)$ from the denominator,
because the first $CN^{1/(\alpha+2)}$ entities will
fail to have the correct joint ordering with
a probability approaching $1$ as $N\to\infty$.

Equation~\eqref{eq:zpok1} leaves open the possibility
that some entity beyond the $n'$th manages to get among
the top $n'$ entities due to sampling fluctuations. 
Those entities each have
only a small chance to be bigger than $X_{n'}$, 
but there are infinitely 
many of them.  Equation~\eqref{eq:zpok2}
shows that with probability tending to $1$,
the first 
$n''=\lfloor (BN/\log(N))^{1/(\alpha+2)}\rfloor$
entities are the correct first $n''$
entities in the correct order.
The limit holds for any $B<A$. 
That is, there is very little scope for interlopers.

Section~\ref{sec:bnc} shows an example
based on $100{,}000{,}000$ words
of the British National Corpus (BNC).
See \cite{asto:burn:1998}.
Using $\alpha$ near $1.1$ in the asymptotic formulas,
we estimate that the first $72$
words are correctly ordered among themselves.
In a Monte Carlo simulation, very few
interloping counts were seen. The estimate
$n'=72$ depends on the Zipf--Poisson
assumption which is an idealization,
but it is quite stable if the log--log
relationship is locally linear in a critical
region of $n$ values.

Section~\ref{sec:proofs} proves our results.
Of independent interest there is 
Lemma~\ref{lem:skellbound} which
gives a Chernoff bound for the \cite{skel:1946}
distribution: For $\lambda\ge\nu>0$ we show that
$\Pr( \poi(\lambda)\le\poi(\nu))\le\exp(-(\sqrt{\lambda}-\sqrt{\nu})^2)$
where
$\poi(\lambda)$ and $\poi(\nu)$ are independent
Poisson random variables with the given means.
Section~\ref{sec:discussion} has our conclusions.

\section{Example: the British National Corpus}\label{sec:bnc}

Figure~\ref{fig:kilg} 
plots the frequency
of English words versus
their rank on a log-log scale, for all words
appearing at least $800$ times among
the approximately $100$ million words of the BNC.
The counts are from \cite{kilg:2006}.
The data plotted have a nearly
linear trend with a slope just steeper
than $-1$.   They are not perfectly
Zipf-like, but the fit is extraordinarily
good considering that it uses just
one parameter for $100$ million total words.

The top $10$ counts from Figure~\ref{fig:kilg}
are shown in Table~\ref{tab:topten}.
The most frequent word `the' is much
more frequent than the second most frequent
word `be'.  
The process generating this data 
clearly favors the word `the' over `be'
and a $p$-value for whether these words
might be equally frequent, using Poisson
assumptions is overwhelmingly significant.
Though the $9$'th and $10$'th words
have counts that are within a few percent
of each other, they too are significantly
different, as judged by
$(X_{9}-X_{10})/\sqrt{X_9+X_{10}}\doteq 34.9$,
the number of estimated standard deviations
separating them.
The $500$'th and $501$'st most popular
words are `report' and `pass' with
counts of 20{,}660 and 20{,}633
respectively. These are not significantly
different.

\begin{figure}
\includegraphics[width=\hsize]{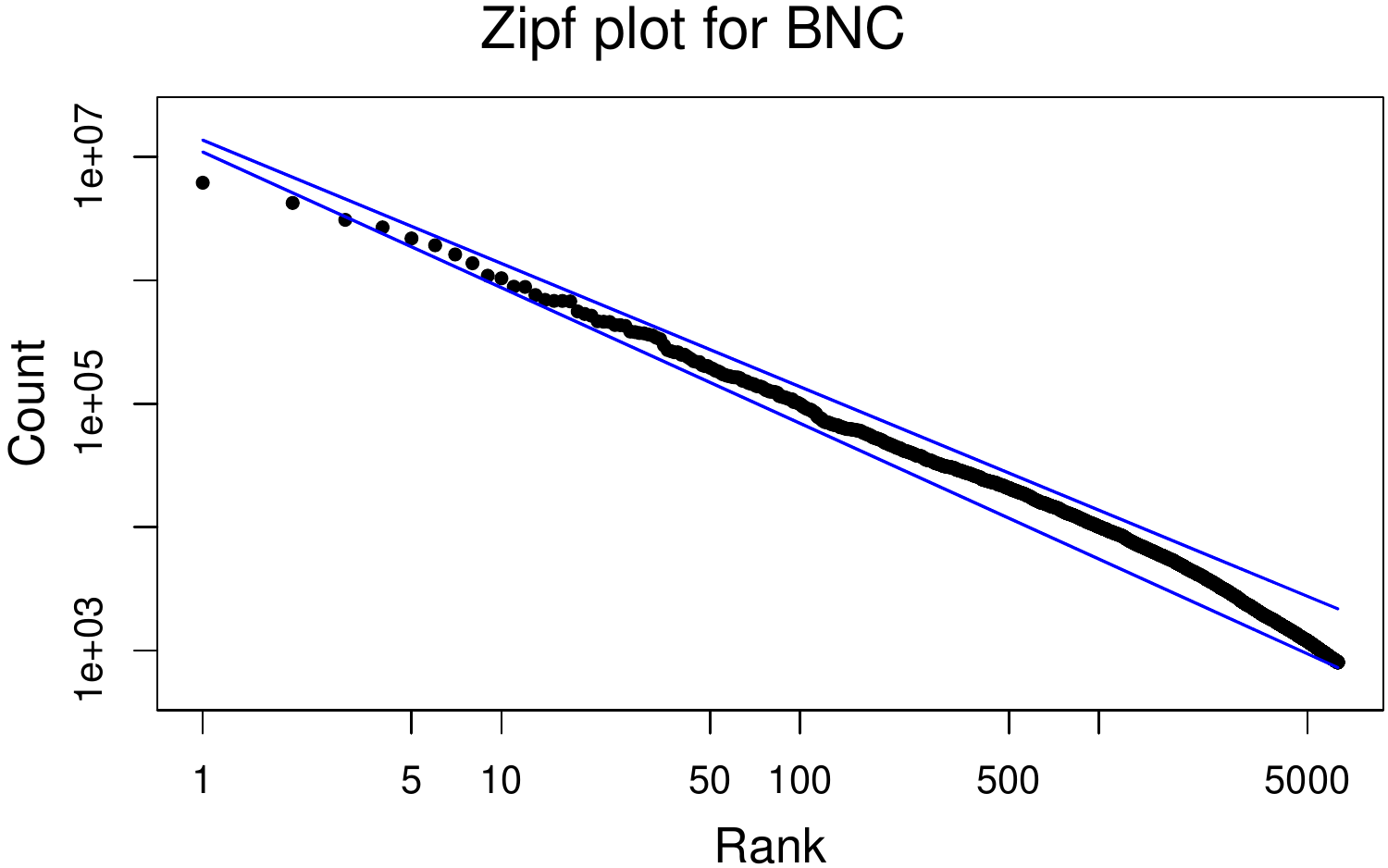}
\caption{\label{fig:kilg}
Zipf plot for the British National
Corpus data. The reference line
above the data has slope $-1$, while that below
the data has slope $-1.1$.
}
\end{figure}

\begin{table}\centering
\newcommand{\pho}{\phantom{1}}
\begin{tabular}{ccc}
\toprule
Rank & Word & Count\\
\midrule 
\pho1 &  the & 6{,}187{,}267 \\
\pho2 &   be & 4{,}239{,}632 \\
\pho3 &   of & 3{,}093{,}444 \\
\pho4 &  and & 2{,}687{,}863  \\
\pho5 &    a & 2{,}186{,}369 \\
\pho6 &   in & 1{,}924{,}315 \\
\pho7 &   to & 1{,}620{,}850 \\
\pho8 & have & 1{,}375{,}636 \\
\pho9 &   it & 1{,}090{,}186 \\
10 &   to & 1{,}039{,}323\\
\toprule
\end{tabular}
\caption{\label{tab:topten}
The top ten most frequent words
from the British National Corpus,
with their frequencies.
Item 7 is the word `to', used as
an infinitive marker, while item 10
is `to' used as a preposition.
In ``I went to the library to read.''
the first `to' is a preposition and
the second is an infinitive marker.
}
\end{table}

We will use a value of $\alpha$ close to $1.1$ to illustrate
the results of Theorem~\ref{thm:zpok}.
The data appear to have approximately
this slope in what will turn out to
be the important region, with ranks from $10$
to $100$.
We don't know $N$ but we can estimate it.
Let
$T= \sum_{i=1}^\infty X_i$
be the total count.
Then $\e(T)=\sum_{i=1}^\infty Ni^{-\alpha}=N\zeta(\alpha)$
where $\zeta(\cdot)$ is the Riemann
zeta function.
We find that
$\zeta(\alpha_*)=10$
for $\alpha_*\doteq1.106$.
Choosing $\alpha=\alpha_*$ we find that 
$T=10^8$ corresponds to $N=N_*\equiv10^7$.

Theorem~\ref{thm:zpok} has the
top $n'=(A(\alpha)N/(\log(N)))^{1/(\alpha+2)}$ entities
correctly ordered among themselves
with probability tending to $1$.
For the BNC data we get
$n'  =(A(\alpha_*)N_*/(\log(N_*)))^{1/(\alpha_*+2)}\doteq 72.08$.
For data like this, we could reasonably expect the
true $72$ most popular words to be correctly ordered
among themselves. 

We did a small simulation of the Zipf--Poisson
model. 
The results are shown
in Figure~\ref{fig:n2hist}.
The number of correctly ordered items
ranged from $69$ to $153$ in those $1000$
simulations. The number was only 
smaller than $72$ for $2$ of the simulated cases.

In our simulation, the first rank
error to occur was usually a transposition between
the $n$'th and $n+1$'st entity. This happened
$982$ times.  There were $7$ cases with
a tie between the $n$'th and $n+1$'st
entity. The remaining $11$ cases all involved
the $n+2$'nd entity getting ahead of the $n$'th.
As a result, we see that interlopers are very
rare, as we might expect from Lemma~\ref{lem:youshallnotpass}.

\begin{figure}[t]
\includegraphics[width=\hsize]{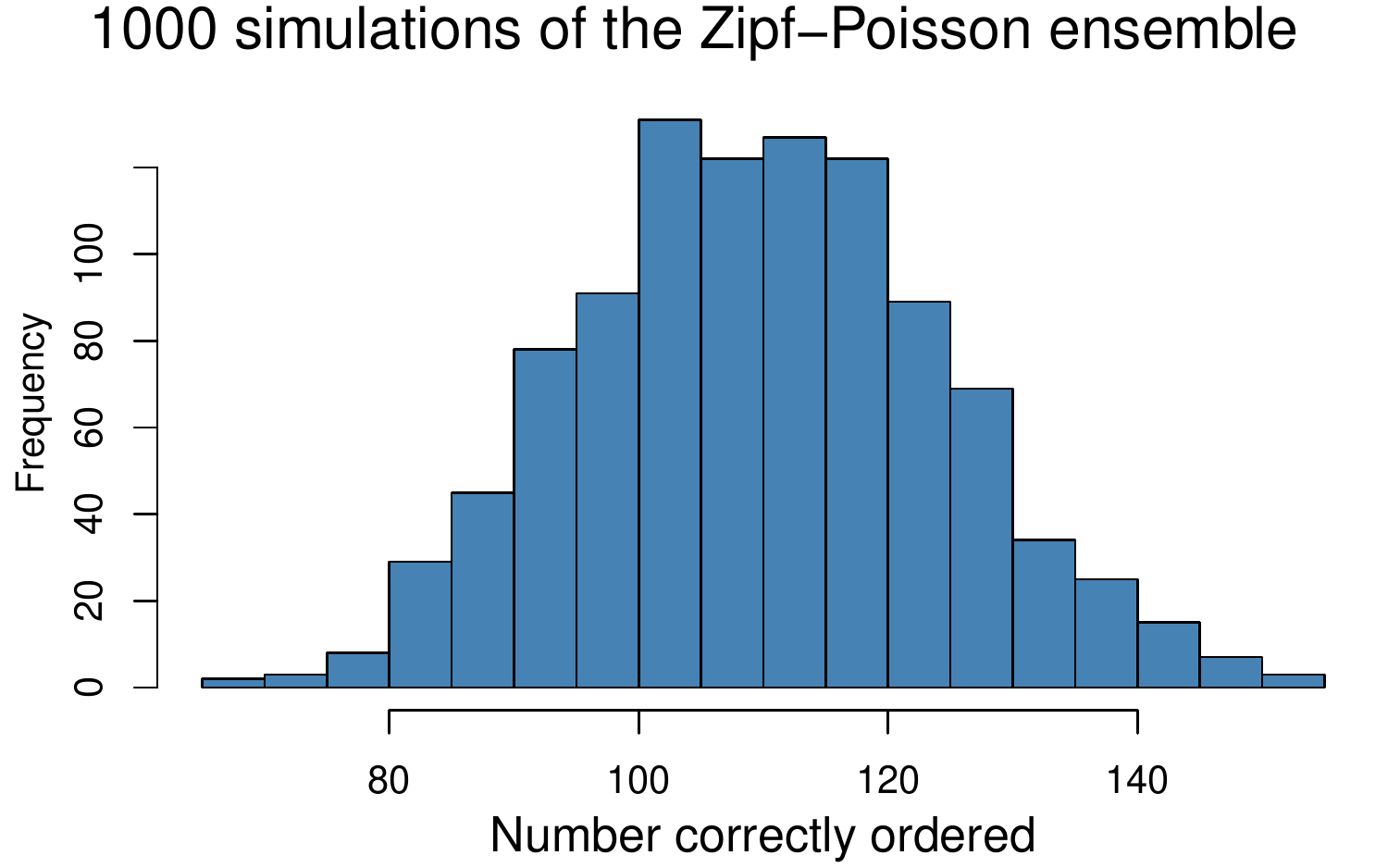}
\caption{\label{fig:n2hist}
The Zipf--Poisson ensemble with $N=10^7$ and
$\alpha=1.106$ was simulated $1000$
times.  
The histogram shows the distribution
of the number of correctly ordered words.
}
\end{figure}

Lack of fit of the Zipf--Poisson model will
affect the estimate somewhat.
Here we give a simple analysis to 
show that the estimate $n'=72$
is remarkably stable.
Even though the log--log plot in Figure~\ref{fig:kilg}
starts off shallower than slope $-1$, the first
$10$ counts are so large that we can be confident
that they are correctly ordered.
Similarly, the log--log plot ends up somewhat steeper
than $-1.1$, but that takes place for very rare
items that have negligible chance of ending up ahead
of the $72$'nd word.
As a result we choose to work with $\alpha=1.106$
and re-estimate $N$ to match the counts in
the range $10\le n\le 100$.
Those counts are large and very stable. A simple estimate of
$N$ is $N_i=X_ii^\alpha$. 
For this data $\min_{10\le i\le 100}N_i\doteq 1.25\times 10^7$.
Using $N=1.25\times10^7$ with $\alpha=1.106$
and $B=1$ gives $n'\doteq 77.10$
raising the estimate only slightly from $72$.
The value $\alpha_*=1.106$ was chosen partly based on
fit and partly based on numerical convenience
that $\zeta(\alpha_*)=10$. Repeating our computation
with $1.05\le\alpha\le1.15$ gives values of 
$N$ that range from $1.1\times 10^7$ to $1.4\times 10^7$,
and estimates $n'$ from $71.04$ to $77.29$.
This estimate is very stable because the Zipf curve
is relatively straight in the critical region.

There is enough wiggling in the log--log plot Figure~\ref{fig:kilg}
between ranks $10$ and $50$, that can be attributed to $\e(X_i)$
not perfectly following a local power law there.  
The British National Corpus rank orderings
are not quite as reliable as those in the fitted Zipf--Poisson model.
Unsurprisingly, a one parameter model shows some lack of
fit on this enormous data set.

\begin{figure}[t]
\includegraphics[width=\hsize]{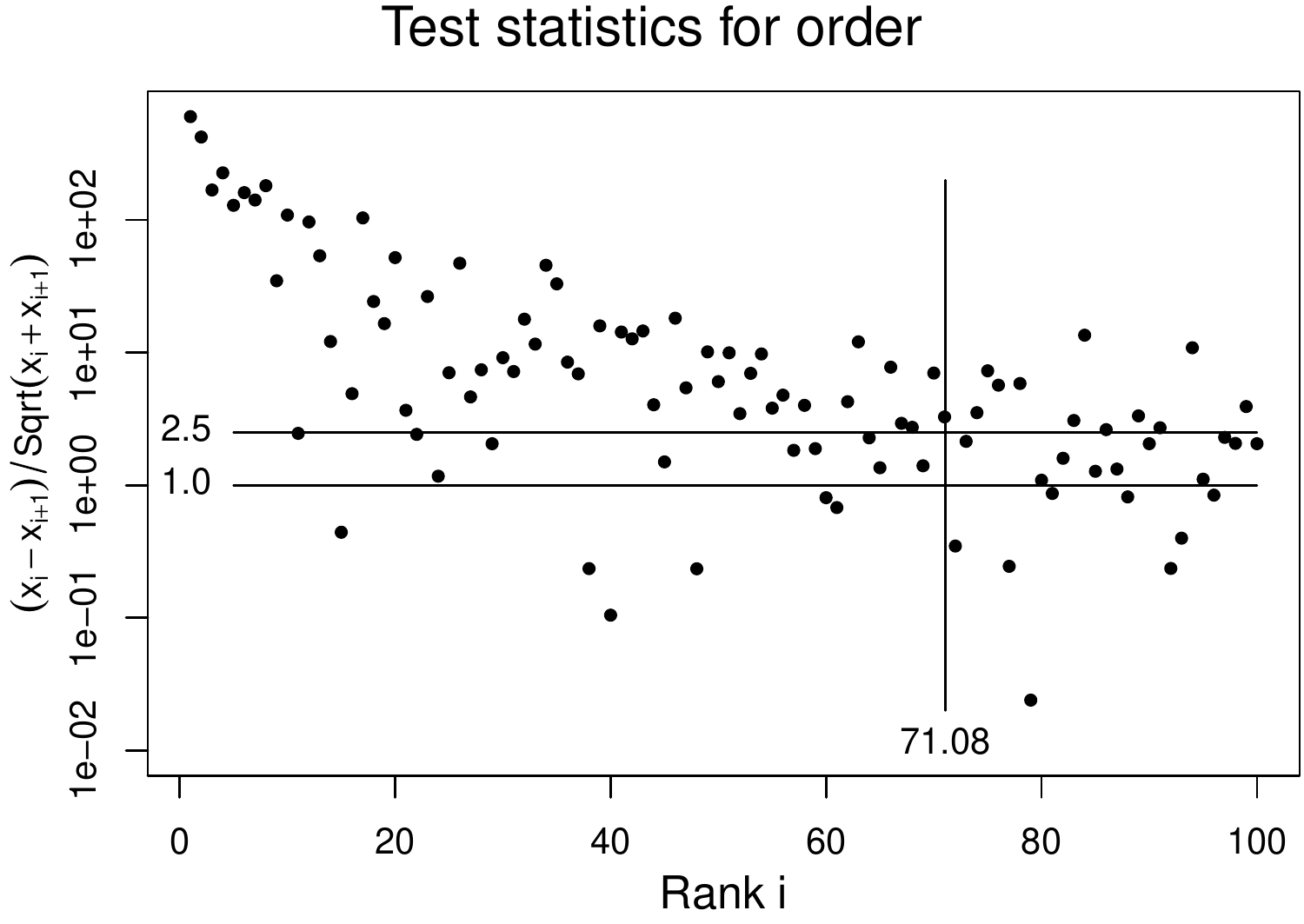}
\caption{\label{fig:sderrs}
This figure plots
$(X_{i}-X_{i+1})/\sqrt{X_{i}+X_{i+1}}$
versus $i=1,\dots,100$ for the BNC
data. The horizontal reference lines
are drawn at $2.5$ standard errors
and at $1.0$ standard errors.
The vertical line is drawn at $n'-1=71.08$.
}
\end{figure}

Figure~\ref{fig:sderrs} plots 
standard errors for the first $100$ consecutive
word comparisons. A horizontal line is at $2.5$.
The theorem predicts that the first $n'=72.08\doteq 72$
words would be correctly ordered
relative to each other.  When the first $n'$
words are correctly ordered the first $n'-1$
differences have the correct sign.
The vertical reference line is at $71.08$.
Beyond $72$ it is clear that many consecutive
word orderings are doubtful.
We also see a few small standard errors
for $n<72$ which correspond to some local flat spots
in Figure~\ref{fig:kilg}.
As a result we might expect a small number of transposition errors
among the first $72$
words, in addition to the large number
of rank errors that set in beyond the $72$'nd word,
as predicted by the Zipf--Poisson model.

\section{Proof of Theorem~\ref{thm:zpok}}\label{sec:proofs}

Theorem~\ref{thm:zpok} has three claims.
First, equation~\eqref{eq:zpok1} on correct ordering
of the $n$ most popular items within themselves,
follows from Corollary~\ref{cor:first} below.
Combining that corollary with Lemma~\ref{lem:youshallnotpass}
to rule out interlopers, establishes~\eqref{eq:zpok2}
in which the first $n$ items are correctly identified and ordered.
The third claim~\eqref{eq:zpok3}, showing the necessity
of the logarithmic factor, follows from Corollary~\ref{cor:second}.

\subsection{Some useful inequalities}
The proof of Theorem~\ref{thm:zpok} makes 
use of some bounds on Poisson
probabilities and the gamma function, collected here.

Let $Y\sim\poi(\lambda)$.  
\citet[page 485]{shor:well:1986} have
the following exponential bounds
\begin{align}\label{eq:klars}
\Pr( Y\ge t)&\le
\Bigl(1-\frac{\lambda}{t+1}\Bigr)^{-1}\frac{e^{-\lambda}\lambda^t}{t!}
\quad\text{for integers}\ t\ge \lambda,\quad\text{and}\\
\Pr( Y\le t)&\le
\Bigl(1-\frac{t}{\lambda}\Bigr)^{-1}\frac{e^{-\lambda}\lambda^t}{t!}
\quad\text{for integers}\ t<\lambda.\label{eq:lowertail}
\end{align}
\cite{klar:2000} shows that~\eqref{eq:klars}
holds for $t\ge \lambda-1$.
Equation \eqref{eq:klars} holds for real valued
$t\ge \lambda$  and
equation~\eqref{eq:lowertail} also holds
for real valued $t<\lambda$.
In both cases we interpret $t!$ as $\Gamma(t+1)$.

A classic result of \cite{teic:1955} is that
\begin{align}\label{eq:teichers}
\Pr( Y\le \lambda)\ge \exp(-1)
\end{align}
when $Y\sim\poi(\lambda)$. 
If $Y\sim\poi(\lambda)$, then
\begin{align}\label{eq:mich}
\sup_{-\infty <t<\infty}
\biggl|
\Pr\biggl(\frac{Y-\lambda}{\sqrt{\lambda}}\le t \biggr)
-\Phi(t)
\biggr|\le \frac{0.8}{\sqrt\lambda},
\end{align}
where $\Phi$ is the standard normal CDF.
Equation~\eqref{eq:mich} follows 
by specializing a Berry-Esseen
result for compound Poisson distributions 
\cite[Theorem 1]{mich:1993} to the case
of a Poisson distribution.

We will also use Gautschi's (1959) inequality \nocite{gaut:1959} on
the Gamma function,
\begin{align}\label{eq:gautschi}
x^{1-s}<\frac{\Gamma(x+1)}{\Gamma(x+s)} < (x+1)^{1-s}
\end{align} 
which holds for $x>0$ and $0<s<1$.



\subsection{Correct relative ordering, equation~\eqref{eq:zpok1}}

The difference of two independent Poisson
random variables has a \cite{skel:1946} distribution.
We begin with a Chernoff bound for the Skellam distribution.

\begin{lemma}\label{lem:skellbound}
Let $Z=X-Y$ where
$X\sim\poi(\lambda)$ and $Y\sim\poi(\nu)$
are independent and $\lambda\ge \nu$.
Then
\begin{equation}\label{eq:skellbound}
\Pr(Z\le 0)\le\exp\bigl( -(\sqrt{\lambda}-\sqrt{\nu})^2\bigr).
\end{equation}
\end{lemma}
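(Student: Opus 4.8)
The plan is to use a Chernoff-style exponential moment bound. Since $Z = X - Y$ with $X \sim \poi(\lambda)$, $Y \sim \poi(\nu)$ independent, I would write $\Pr(Z \le 0) = \Pr(-Z \ge 0) \le \e[e^{-sZ}]$ for any $s > 0$, exploiting that $-Z$ is integer-valued so $\{-Z \ge 0\} = \{e^{-sZ} \ge 1\}$ and Markov's inequality applies cleanly. By independence, $\e[e^{-sZ}] = \e[e^{-sX}]\,\e[e^{sY}]$, and the Poisson moment generating function gives $\e[e^{-sX}] = \exp(\lambda(e^{-s}-1))$ and $\e[e^{sY}] = \exp(\nu(e^{s}-1))$. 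Hence
\begin{align*}
\Pr(Z \le 0) \le \exp\bigl( \lambda(e^{-s}-1) + \nu(e^{s}-1)\bigr)
\end{align*}
for every $s > 0$.

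Next I would optimize over $s$. Differentiating the exponent $f(s) = \lambda(e^{-s}-1) + \nu(e^{s}-1)$ gives $f'(s) = -\lambda e^{-s} + \nu e^{s}$, which vanishes when $e^{2s} = \lambda/\nu$, i.e. $e^{s} = \sqrt{\lambda/\nu}$. Since $\lambda \ge \nu > 0$ this optimal $s$ is nonnegative (strictly positive unless $\lambda = \nu$, in which case the bound degenerates to $\Pr(Z\le 0)\le 1$ which is trivially true and matches the right-hand side equal to $1$). Substituting $e^{-s} = \sqrt{\nu/\lambda}$ and $e^{s} = \sqrt{\lambda/\nu}$ into $f(s)$ yields
\begin{align*}
f(s) = \lambda\Bigl(\sqrt{\tfrac{\nu}{\lambda}} - 1\Bigr) + \nu\Bigl(\sqrt{\tfrac{\lambda}{\nu}} - 1\Bigr) = \sqrt{\lambda\nu} - \lambda + \sqrt{\lambda\nu} - \nu = -\bigl(\lambda - 2\sqrt{\lambda\nu} + \nu\bigr) = -\bigl(\sqrt{\lambda} - \sqrt{\nu}\bigr)^2,
\end{align*}
which gives exactly the claimed bound $\Pr(Z \le 0) \le \exp(-(\sqrt{\lambda}-\sqrt{\nu})^2)$.

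This argument is essentially complete with no real obstacle; the only points requiring a word of care are the degenerate case $\lambda = \nu$ (handled above) and the justification that the critical point of $f$ is a minimum of the exponent, hence gives the tightest bound — this follows since $f''(s) = \lambda e^{-s} + \nu e^{s} > 0$, so $f$ is convex and the stationary point is its global minimum. One should also note the bound is valid as an inequality for \emph{any} $s>0$, so even without verifying optimality the substitution $e^s = \sqrt{\lambda/\nu}$ immediately delivers the result; the convexity remark is only needed if one wants to claim this is the best exponential bound obtainable this way. I would present the proof in this order: Markov/Chernoff setup, Poisson MGFs, choice of $s$, and simplification of the exponent.
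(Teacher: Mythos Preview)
Your proof is correct and follows essentially the same Chernoff-bound approach as the paper: Markov's inequality applied to $e^{-tZ}$, the Poisson Laplace transform, and the optimizing choice $e^{t}=\sqrt{\lambda/\nu}$ yielding the exponent $-(\sqrt{\lambda}-\sqrt{\nu})^2$. The only cosmetic difference is that the paper packages the exponent as $\varphi(t)=\lambda e^{-t}+\nu e^{t}$ and notes its convexity up front, whereas you differentiate directly; your added remarks on the degenerate case $\lambda=\nu$ and on convexity being optional are fine but not needed for the argument.
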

\begin{proof}
Let $\varphi(t)=\lambda e^{-t}+\nu e^t$. Then $\varphi$
is a convex function attaining 
its minimum at $t^*=\log(\sqrt{\lambda/\nu})\ge 0$,
with $\varphi(t^*)=2\sqrt{\lambda\nu}$.
Using the Laplace transform of the Poisson distribution
$$m(t)\equiv \e(e^{-tZ})=e^{\lambda(e^{-t}-1)}e^{\nu(e^{t}-1)}
=e^{-(\lambda+\nu)}e^{\varphi(t)}.$$
For $t\ge 0$, Markov's inequality gives $\Pr(Z\le 0)=\Pr(e^{-tZ}\ge1)\le \e(e^{-tZ})$.
Taking $t=t^*$ 
yields~\eqref{eq:skellbound}.
\end{proof}

\begin{lemma}\label{lem:topn}
Let $X_i$ be sampled from the Zipf--Poisson ensemble.
Then for $n\ge 2$, 
\begin{align}\label{eq:topn}
\Pr(X_1>X_2>\dots>X_n)
\ge 1-n\exp\Bigl(-\frac{N\alpha^2}4\, n^{-\alpha-2}\Bigr).
\end{align}
\end{lemma}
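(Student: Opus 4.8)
The plan is to bound the failure probability of the event $\{X_1 > X_2 > \dots > X_n\}$ by a union bound over the $n-1$ consecutive comparisons, and then control each comparison using the Skellam Chernoff bound from Lemma~\ref{lem:skellbound}. Writing $F$ for the complementary (bad) event, we have $F \subseteq \bigcup_{i=1}^{n-1}\{X_i \le X_{i+1}\}$, since if all consecutive inequalities $X_i > X_{i+1}$ hold then the full chain holds. Hence $\Pr(F) \le \sum_{i=1}^{n-1}\Pr(X_i \le X_{i+1})$. For each $i$, the variables $X_i \sim \poi(Ni^{-\alpha})$ and $X_{i+1}\sim \poi(N(i+1)^{-\alpha})$ are independent with $Ni^{-\alpha} \ge N(i+1)^{-\alpha}$, so Lemma~\ref{lem:skellbound} gives $\Pr(X_i \le X_{i+1}) \le \exp\bigl(-(\sqrt{Ni^{-\alpha}} - \sqrt{N(i+1)^{-\alpha}})^2\bigr) = \exp\bigl(-N(i^{-\alpha/2} - (i+1)^{-\alpha/2})^2\bigr)$.

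The technical heart of the argument is a lower bound on $i^{-\alpha/2} - (i+1)^{-\alpha/2}$ that is uniform enough over $1 \le i \le n-1$ to collapse the sum into the clean bound in~\eqref{eq:topn}. The natural route is the mean value theorem: $i^{-\alpha/2} - (i+1)^{-\alpha/2} = \tfrac{\alpha}{2}\xi^{-\alpha/2-1}$ for some $\xi \in (i, i+1)$, and since $t \mapsto t^{-\alpha/2-1}$ is decreasing we get $i^{-\alpha/2} - (i+1)^{-\alpha/2} \ge \tfrac{\alpha}{2}(i+1)^{-\alpha/2-1}$. Thus $\Pr(X_i \le X_{i+1}) \le \exp\bigl(-\tfrac{N\alpha^2}{4}(i+1)^{-\alpha-2}\bigr)$. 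Because the exponent is increasing in $i+1$ (more negative for larger index), the largest term in the sum is at $i = n-1$, i.e. $i+1 = n$, giving $\Pr(X_i \le X_{i+1}) \le \exp\bigl(-\tfrac{N\alpha^2}{4}n^{-\alpha-2}\bigr)$ for every $i \le n-1$. Summing the $n-1 < n$ terms yields $\Pr(F) \le n\exp\bigl(-\tfrac{N\alpha^2}{4}n^{-\alpha-2}\bigr)$, which is exactly~\eqref{eq:topn}.

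The main obstacle, such as it is, is making sure the monotonicity bookkeeping is airtight: one must verify that after the mean-value step the per-term bound really is maximized at the last index, so that replacing every $(i+1)^{-\alpha-2}$ by $n^{-\alpha-2}$ is a valid (upward) estimate of each summand. This is immediate since $\alpha + 2 > 0$ makes $j \mapsto j^{-\alpha-2}$ decreasing, so $(i+1)^{-\alpha-2} \ge n^{-\alpha-2}$ for $i+1 \le n$, hence $-\tfrac{N\alpha^2}{4}(i+1)^{-\alpha-2} \le -\tfrac{N\alpha^2}{4}n^{-\alpha-2}$ and the exponential is correspondingly bounded. No delicate estimates are needed beyond Gautschi-type care, and in fact this lemma uses only Lemma~\ref{lem:skellbound} plus elementary calculus; the sharper constant $A = \alpha^2(\alpha+2)/4$ in the theorem presumably comes later from optimizing a more refined version rather than from this crude union bound.
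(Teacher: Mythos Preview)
Your proof is correct and follows essentially the same approach as the paper: a union bound over consecutive comparisons, the Skellam Chernoff bound (Lemma~\ref{lem:skellbound}), and the mean value theorem applied to $f(x)=x^{-\alpha/2}$ together with monotonicity of $|f'|$ to replace every term by the worst one at index $n$. Your closing aside is slightly off, though: the constant $A=\alpha^2(\alpha+2)/4$ in Corollary~\ref{cor:first} is obtained directly from this very bound, not from any later refinement.
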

\begin{proof}
By Lemma~\ref{lem:skellbound} and the Bonferroni inequality,
the probability that $X_{i+1}\ge X_i$ holds for any
$i<n$ is no more than
\begin{align}\label{eq:pwrong1}
\sum_{i=1}^{n-1}\exp\bigl( -\bigl(\sqrt{\lambda_i}-\sqrt{\lambda_{i+1}}\,\bigr)^2\,\bigr)
&= \sum_{i=1}^{n-1}\exp\bigl( -N\bigl(\sqrt{\theta_i}-\sqrt{\theta_{i+1}}\,\bigr)^2\,\bigr).
\end{align}
For $x\ge 1$, let $f(x)=x^{-\alpha/2}$.
Then $|\sqrt{\theta_i}-\sqrt{\theta_{i+1}}| = |f(i)-f(i+1)| = |f'(z)|$
for some $z\in(i,i+1)$.
Because $|f'|$ is decreasing, \eqref{eq:pwrong1} is at most
$n\exp(-Nf'(n)^2)$, establishing~\eqref{eq:topn}.
\end{proof}

Now we can establish the first
claim in Theorem~\ref{thm:zpok}.

\begin{corollary}\label{cor:first}
Let $X_i$ be sampled from the Zipf--Poisson ensemble.
Choose $n=n(N)\ge 2$ so that
$n\le (AN/\log(N))^{1/(\alpha+2)}$ holds
for all large enough $N$
where $A=\alpha^2(\alpha+2)/4$.
Then
$$\lim_{N\to\infty}\Pr( X_1>X_2>\dots>X_n)= 1.$$
\end{corollary}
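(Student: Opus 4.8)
The plan is to combine Lemma~\ref{lem:topn} with the stated bound on $n(N)$ and show the error term vanishes as $N\to\infty$. By Lemma~\ref{lem:topn},
\begin{align*}
\Pr(X_1>X_2>\dots>X_n)\ge 1-n\exp\Bigl(-\frac{N\alpha^2}{4}\,n^{-\alpha-2}\Bigr),
\end{align*}
so it suffices to prove that $n\exp(-\tfrac{N\alpha^2}{4}n^{-\alpha-2})\to 0$.

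First I would substitute the hypothesis $n\le (AN/\log(N))^{1/(\alpha+2)}$, equivalently $n^{\alpha+2}\le AN/\log(N)$, i.e.\ $n^{-\alpha-2}\ge \log(N)/(AN)$. Plugging this into the exponent and using $A=\alpha^2(\alpha+2)/4$ gives
\begin{align*}
\frac{N\alpha^2}{4}\,n^{-\alpha-2}\ge \frac{N\alpha^2}{4}\cdot\frac{\log(N)}{AN}=\frac{\alpha^2}{4A}\log(N)=\frac{1}{\alpha+2}\log(N),
\end{align*}
so $\exp(-\tfrac{N\alpha^2}{4}n^{-\alpha-2})\le N^{-1/(\alpha+2)}$. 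Since also $n\le (AN/\log(N))^{1/(\alpha+2)}\le (AN)^{1/(\alpha+2)}$, the product satisfies
\begin{align*}
n\exp\Bigl(-\frac{N\alpha^2}{4}n^{-\alpha-2}\Bigr)\le (AN)^{1/(\alpha+2)}\cdot N^{-1/(\alpha+2)}=A^{1/(\alpha+2)},
\end{align*}
which is bounded but not yet $o(1)$ — so a little more care is needed.

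To close the gap I would exploit the slack between $(AN)^{1/(\alpha+2)}$ and $(AN/\log N)^{1/(\alpha+2)}$: in fact $n\le (AN/\log(N))^{1/(\alpha+2)} = A^{1/(\alpha+2)} N^{1/(\alpha+2)} (\log N)^{-1/(\alpha+2)}$, so
\begin{align*}
n\exp\Bigl(-\frac{N\alpha^2}{4}n^{-\alpha-2}\Bigr)\le A^{1/(\alpha+2)}\,(\log N)^{-1/(\alpha+2)}\longrightarrow 0
\end{align*}
as $N\to\infty$, since $\alpha+2>0$. This gives $\Pr(X_1>\dots>X_n)\ge 1 - A^{1/(\alpha+2)}(\log N)^{-1/(\alpha+2)}\to 1$, completing the proof. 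One routine point to handle is that Lemma~\ref{lem:topn} requires $n\ge 2$, which holds for all large $N$ by hypothesis; if $n(N)=1$ for some $N$ the event is trivially certain, so this causes no difficulty.

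The main obstacle is the observation that the crude bound using $n\le(AN)^{1/(\alpha+2)}$ only yields a constant, not something tending to zero; the resolution is simply to retain the $(\log N)^{-1/(\alpha+2)}$ factor from the sharper hypothesis rather than discarding it. Everything else is direct substitution into Lemma~\ref{lem:topn}, and the particular value $A=\alpha^2(\alpha+2)/4$ is precisely what makes the exponent equal $\tfrac{1}{\alpha+2}\log N$, so that $n$ and the exponential are balanced up to the logarithmic gain.
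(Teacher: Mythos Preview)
Your proof is correct and follows essentially the same route as the paper: apply Lemma~\ref{lem:topn}, use the hypothesis $n^{\alpha+2}\le AN/\log N$ to bound the exponent by $\tfrac{1}{\alpha+2}\log N$, and combine with the bound on $n$ to obtain the error term $(A/\log N)^{1/(\alpha+2)}\to 0$. The paper writes $n=(A_NN/\log N)^{1/(\alpha+2)}$ with $A_N\le A$ rather than working with the inequality directly, but the computation is identical; your brief detour through the crude bound $n\le (AN)^{1/(\alpha+2)}$ is unnecessary but harmless.
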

\begin{proof}
For large enough $N$ we let
$n =(A_NN/\log(N))^{1/(\alpha+2)}$ for $A_N\le A$. Then
\begin{align*}
n\exp\Bigl(-\frac{N\alpha^2}4\, n^{-\alpha-2}\Bigr)
& =\biggl(\frac{A_NN}{\log(N)}\biggr)^{1/(\alpha+2)} \,N^{-{\alpha^2}/(4A_N)}\\
& =\biggl(\frac{A_NN}{\log(N)}\biggr)^{1/(\alpha+2)} \,N^{-{\alpha^2}/(4\alpha^2(\alpha+2)/4)}\\
&\le \biggl(\frac{A}{\log(N)}\biggr)^{1/(\alpha+2)}\\
&\to 0.
\end{align*}
The proof then follows from Lemma~\ref{lem:topn}.
\end{proof}

\subsection{Correct absolute ordering, equation~\eqref{eq:zpok2}}

For the second claim in Theorem~\ref{thm:zpok}
we need to control the probability that one
of the entities $X_i$ from the tail given by $i>n$,
can jump over one of the first $n$ entities.
Lemma~\ref{lem:jumper} bounds the probability that
an entity from the tail of the Zipf--Poisson ensemble
can jump over a high level $\tau$.

\begin{lemma}\label{lem:jumper}
Let $X_i$ for $i\ge 1$ be from the Zipf--Poisson
ensemble with parameter $\alpha>1$.
If $\tau\ge \lambda_n$
then
\begin{align}\label{eq:jumper}
\Pr\Bigl( \max_{i> n} X_i >\tau\Bigr)\le 
\frac{N^{1/\alpha}}\alpha\frac{\tau+1}{\tau+1-\lambda_n}\frac{{\tau}^{-1/\alpha}}{\tau-1/\alpha}.
\end{align}
\end{lemma}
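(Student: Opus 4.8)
The plan is to bound the tail-jump probability by a union bound over all $i>n$, controlling $\Pr(X_i>\tau)$ for each $i$ via the upper-tail bound~\eqref{eq:klars}, and then to sum the resulting series by comparing it to an integral. First I would fix $i>n$ and note that since $\tau\ge\lambda_n>\lambda_i$ for $i>n$, the condition $\tau\ge\lambda_i$ needed for~\eqref{eq:klars} holds; thus
\begin{align*}
\Pr(X_i>\tau)=\Pr(X_i\ge \lceil\tau\rceil)\le \Bigl(1-\frac{\lambda_i}{\tau+1}\Bigr)^{-1}\frac{e^{-\lambda_i}\lambda_i^{\tau}}{\Gamma(\tau+1)},
\end{align*}
using the real-valued form of the bound. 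Since $\lambda_i\le\lambda_n$, the prefactor $(1-\lambda_i/(\tau+1))^{-1}$ is at most $(\tau+1)/(\tau+1-\lambda_n)$, which is the first fraction appearing on the right of~\eqref{eq:jumper}; I can pull it out of the sum. Also $e^{-\lambda_i}\le 1$, so I am left needing to bound $\sum_{i>n}\lambda_i^{\tau}/\Gamma(\tau+1)$.

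Next I would write $\lambda_i^{\tau}=N^{\tau}i^{-\alpha\tau}$ and bound the sum $\sum_{i>n}i^{-\alpha\tau}$ by the integral $\int_n^{\infty}x^{-\alpha\tau}\rd x = n^{1-\alpha\tau}/(\alpha\tau-1)$, valid once $\alpha\tau>1$ (which holds for large $N$ since $\tau\ge\lambda_n\to\infty$). The key algebraic move is then to recognize that $N^{\tau}n^{1-\alpha\tau}=N\cdot(N n^{-\alpha})^{\tau-1}\cdot n = N\lambda_n^{\tau-1} n$... actually more cleanly: $N^{\tau}n^{-\alpha\tau}=\lambda_n^{\tau}$ and $N^{\tau}n^{1-\alpha\tau}=n\lambda_n^{\tau}$; but I want to match the stated bound, which has $N^{1/\alpha}$ and $\tau^{-1/\alpha}$, so I should instead handle the endpoint $n$ by relating it to $\tau$. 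Since $\tau\ge\lambda_n=Nn^{-\alpha}$, we get $n\le (N/\tau)^{1/\alpha}$, hence $n^{1-\alpha\tau}=n\cdot n^{-\alpha\tau}\le (N/\tau)^{1/\alpha} n^{-\alpha\tau}$. Combining, $\sum_{i>n}i^{-\alpha\tau}\le (N/\tau)^{1/\alpha}n^{-\alpha\tau}/(\alpha\tau-1)$, so that
\begin{align*}
\sum_{i>n}\frac{\lambda_i^{\tau}}{\Gamma(\tau+1)}\le \frac{(N/\tau)^{1/\alpha}(N n^{-\alpha})^{\tau}}{(\alpha\tau-1)\,\Gamma(\tau+1)}=\frac{(N/\tau)^{1/\alpha}\lambda_n^{\tau}}{(\alpha\tau-1)\,\Gamma(\tau+1)}.
\end{align*}

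The final step is to dispose of the factor $\lambda_n^{\tau}/\Gamma(\tau+1)$. Here I would use $\tau\ge\lambda_n$: the ratio $\lambda_n^{\tau}/\Gamma(\tau+1)$ is, as a function of the integration variable, maximized near $\tau\approx\lambda_n$, and since we are evaluating at $\tau\ge\lambda_n$ we have $\lambda_n^{\tau}/\Gamma(\tau+1)\le \tau^{\tau}/\Gamma(\tau+1)\le e^{\tau}/\sqrt{2\pi\tau}\le 1$ — wait, that blows up; the clean route is instead $\lambda_n^\tau/\Gamma(\tau+1)\le \tau^\tau e^{-\tau}/\Gamma(\tau+1)\cdot e^{\tau}$, which is not bounded. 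So the honest bound must keep a factor like $e^{\tau-\lambda_n}\cdot(\lambda_n/\tau)^\tau$ small via $\lambda_n\le\tau$: we have $(\lambda_n/\tau)^\tau e^{\tau-\lambda_n}\le 1$ by the inequality $x^\tau e^{\tau(1-x)}\le 1$ for $x=\lambda_n/\tau\le 1$, combined with $\tau^\tau e^{-\tau}\le\Gamma(\tau+1)$, giving $\lambda_n^\tau/\Gamma(\tau+1)\le 1$. Then, replacing $\alpha\tau-1$ by $\alpha(\tau-1/\alpha)$ and collecting the prefactor $(\tau+1)/(\tau+1-\lambda_n)$ gives exactly~\eqref{eq:jumper}. \textbf{The main obstacle} I anticipate is making the integral comparison and the $\lambda_n^\tau/\Gamma(\tau+1)$ bound rigorous with the precise constants displayed — in particular justifying $\tau^\tau e^{-\tau}\le\Gamma(\tau+1)$ for non-integer $\tau$ (which follows from log-convexity of $\Gamma$, or from Stirling-type bounds), and checking that the monotonicity of $x^{-\alpha\tau}$ legitimately allows the sum-to-integral step including the correct treatment of the $i=n+1$ term versus the lower limit $n$.
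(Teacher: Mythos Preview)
There is a genuine gap. The step where you discard $e^{-\lambda_i}\le 1$ is fatal: it leaves you needing $\lambda_n^{\tau}/\Gamma(\tau+1)\le 1$, and that inequality is simply false under the hypothesis $\tau\ge\lambda_n$. Your two correct ingredients $(\lambda_n/\tau)^{\tau}e^{\tau-\lambda_n}\le 1$ and $\tau^{\tau}e^{-\tau}\le\Gamma(\tau+1)$ multiply to give $e^{-\lambda_n}\lambda_n^{\tau}\le\Gamma(\tau+1)$, not $\lambda_n^{\tau}\le\Gamma(\tau+1)$; in other words they recover exactly the exponential factor you threw away. At the boundary case $\tau=\lambda_n$ one has $\lambda_n^{\tau}/\Gamma(\tau+1)=\tau^{\tau}/\Gamma(\tau+1)\sim e^{\tau}/\sqrt{2\pi\tau}\to\infty$, so no repair of the constants will save this route. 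There is also a sign slip: from $\tau\ge Nn^{-\alpha}$ one gets $n\ge(N/\tau)^{1/\alpha}$, not $\le$; correcting it (using that $1-\alpha\tau<0$) replaces $\lambda_n^{\tau}$ by $\tau^{\tau}$ in your display, which makes the problem worse, not better.

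The paper avoids all of this by \emph{not} separating $e^{-\lambda_i}$ from $\lambda_i^{\tau}$. It uses that $e^{-\lambda}\lambda^{\tau}$ is increasing on $[0,\tau]$, so $e^{-\lambda_i}\lambda_i^{\tau}$ is decreasing in $i$ for $i\ge n$, and bounds the sum by $\int_n^{\infty}e^{-Nx^{-\alpha}}(Nx^{-\alpha})^{\tau}\rd x$. The substitution $y=Nx^{-\alpha}$ turns this into $(N^{1/\alpha}/\alpha)\int_0^{\lambda_n}e^{-y}y^{\tau-1/\alpha-1}\rd y\le (N^{1/\alpha}/\alpha)\,\Gamma(\tau-1/\alpha)$. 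The remaining ratio $\Gamma(\tau-1/\alpha)/\Gamma(\tau+1)$ is then handled by Gautschi's inequality~\eqref{eq:gautschi} with $s=1-1/\alpha$, giving precisely $\tau^{-1/\alpha}/(\tau-1/\alpha)$. The exponential you dropped is what makes the integral converge to a Gamma function rather than exploding like $\tau^{\tau}$.
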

\begin{proof}
First, 
$\Pr( \max_{i>n} X_i>\tau) \le
\sum_{i=n+1}^\infty\Pr(X_i>\tau)$ and then from \eqref{eq:klars}
\begin{align*}
\Pr\Bigl( \max_{i>n} X_i>\tau\Bigr)
\le & 
\Bigl(1-\frac{\lambda_n}{\tau+1}\Bigr)^{-1}
\sum_{i=n+1}^\infty\frac{e^{-\lambda_i}\lambda_i^\tau}{\Gamma(\tau+1)}.
\end{align*}
Now $\lambda_i = Ni^{-\alpha}$.
For $i>n$ we have $\tau>\lambda_i = Ni^{-\alpha}$.
Over this range, $e^{-\lambda}\lambda^\tau$ is an increasing
function of $\lambda$. 
Therefore,
\begin{align*}
\sum_{i=n+1}^\infty e^{-\lambda_i}\lambda_i^\tau
&\le\int_n^\infty e^{-Nx^{-\alpha}}(Nx^{-\alpha})^\tau \rd x\\
&\le\frac{N^{1/\alpha}}\alpha
\int_0^{Nn^{-\alpha}} e^{-y}y^{\tau-1/\alpha-1}\rd y\\
&\le\frac{N^{1/\alpha}}\alpha\Gamma(\tau-1/\alpha).
\end{align*}

As a result
$$
\Pr\Bigl(\max_{i>n}X_i>\tau\Bigr)\le
\frac{N^{1/\alpha}}\alpha
\frac{\tau+1}{\tau+1-\lambda_n}
\frac{\Gamma(\tau-1/\alpha)}{\Gamma(\tau+1)}.
$$
Now 
$$
\frac{\Gamma(\tau-1/\alpha)}{\Gamma(\tau+1)}
=\frac{\Gamma(\tau+1-1/\alpha)}{\Gamma(\tau+1)}
\frac1{\tau-1/\alpha}
< \frac{{\tau}^{-1/\alpha}}{\tau-1/\alpha}
$$ 
by Gautschi's inequality~\eqref{eq:gautschi}, 
with $s=1-1/\alpha$,
establishing~\eqref{eq:jumper}.
\end{proof}

For an incorrect ordering to arise, either an entity from
the tail exceeds a high level, or an entity from among the
first $n$ is unusually low.
Lemma~\ref{lem:youshallnotpass} uses a threshold
for which both such events are unlikely,
establishing the second claim~\eqref{eq:zpok2} of Theorem~\ref{thm:zpok}.

\begin{lemma}\label{lem:youshallnotpass}
Let $X_i$ for $i\ge 1$ be from the Zipf--Poisson ensemble with parameter
$\alpha>1$. 
Let $n(N)$ satisfy
$n\ge (AN/\log(N))^{1/(\alpha+2)}$ for $0<A<A(\alpha) = \alpha^2(\alpha+2)/4$.
Let $m\le (BN/\log(N))^{1/(\alpha+2)}$ for $0<B<A$.
Then
\begin{align}\label{eq:youshallnotpass}
\lim_{N\to\infty}\Pr\Bigl( \max_{i>n}X_i \ge X_m \Bigr)=0.
\end{align}
\end{lemma}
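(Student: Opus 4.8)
The plan is to bound the bad event by a union of two pieces and control each with a threshold chosen to balance them. Concretely, fix a level $\tau$ with $\lambda_m > \tau \ge \lambda_n$. Then
\begin{align*}
\Pr\Bigl(\max_{i>n}X_i\ge X_m\Bigr)
\le \Pr\Bigl(\max_{i>n}X_i>\tau\Bigr)+\Pr\bigl(X_m\le\tau\bigr),
\end{align*}
and I would estimate the first term with Lemma~\ref{lem:jumper} and the second with the lower-tail Poisson bound~\eqref{eq:lowertail}. The natural choice is to put $\tau$ a constant fraction of the way between $\lambda_n$ and $\lambda_m$, or more simply to set $\tau = \lambda_n$ (the boundary case), or $\tau$ slightly above; since $\lambda_m = Nm^{-\alpha}$ and $m\le(BN/\log N)^{1/(\alpha+2)}$ gives $\lambda_m \ge (B^{-1}\log N)^{\alpha/(\alpha+2)}N^{2/(\alpha+2)}$, while $\lambda_n \le (A^{-1}\log N)^{\alpha/(\alpha+2)}N^{2/(\alpha+2)}$, the two means are comparable up to the constant $(A/B)^{\alpha/(\alpha+2)}>1$, which is exactly the slack that makes both tails summable.

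For the jumper term, Lemma~\ref{lem:jumper} with $\tau$ of order $N^{2/(\alpha+2)}$ gives a bound of order $N^{1/\alpha}\cdot \tau^{-1-1/\alpha}$ up to the harmless factors $\frac{\tau+1}{\tau+1-\lambda_n}$ and $\frac{\tau}{\tau-1/\alpha}$, which tend to a constant once $\tau$ is bounded away from $\lambda_n$; substituting $\tau\asymp N^{2/(\alpha+2)}$ yields $N^{1/\alpha - (2/(\alpha+2))(1+1/\alpha)} = N^{1/\alpha - 2(\alpha+1)/(\alpha(\alpha+2))}$, and one checks the exponent equals $-1/\alpha<0$, so this term goes to $0$ polynomially. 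For the low-count term, $\Pr(X_m\le\tau)$ with $\tau = c\lambda_m$ for some $c<1$: by~\eqref{eq:lowertail} this is at most $(1-c)^{-1}e^{-\lambda_m}\lambda_m^\tau/\Gamma(\tau+1)$, and Stirling (or the standard Chernoff bound $\Pr(\poi(\lambda)\le c\lambda)\le e^{-\lambda(1-c+c\log c)}$, which also follows from Lemma~\ref{lem:skellbound}-style reasoning with $\nu=0$) gives decay like $\exp(-c'\lambda_m)$ for a positive constant $c'=c'(c)$. Since $\lambda_m\ge (\log N)^{\alpha/(\alpha+2)}N^{2/(\alpha+2)}\to\infty$ faster than any power of $\log N$, this term is super-polynomially small.

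The one delicate point is the interface: I need a single $\tau$ with $\lambda_n\le\tau$ (so Lemma~\ref{lem:jumper} applies) and $\tau<\lambda_m$ with enough of a gap that $\Pr(X_m\le\tau)$ is the small-deviation (Chernoff) regime rather than near the mean. Taking $\tau=\tfrac12(\lambda_n+\lambda_m)$ works when $\lambda_n<\lambda_m$, which holds for all large $N$ because $m\le(BN/\log N)^{1/(\alpha+2)}<(AN/\log N)^{1/(\alpha+2)}\le n$ and $\lambda_i$ is strictly decreasing; then $\tau/\lambda_m = \tfrac12(1+\lambda_n/\lambda_m)\le\tfrac12(1+(B/A)^{\alpha/(\alpha+2)})=:c<1$, a constant strictly below $1$, so the Chernoff exponent $c'(c)$ is a fixed positive number and the whole second term is at most $\exp(-c'\lambda_m)\to 0$.

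The main obstacle I anticipate is purely bookkeeping: keeping the prefactors in Lemma~\ref{lem:jumper} ($\frac{\tau+1}{\tau+1-\lambda_n}$ in particular) under control. With $\tau=\tfrac12(\lambda_n+\lambda_m)$ we have $\tau+1-\lambda_n=\tfrac12(\lambda_m-\lambda_n)+1\ge\tfrac12\lambda_n\bigl((A/B)^{\alpha/(\alpha+2)}-1\bigr)$, so that ratio is $O(1)$, and $\frac{N^{1/\alpha}}{\alpha}\frac{\tau^{-1/\alpha}}{\tau-1/\alpha}=O(N^{1/\alpha}\tau^{-1-1/\alpha})=O(N^{-1/\alpha})$. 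Summing the two bounds, $\Pr(\max_{i>n}X_i\ge X_m)=O(N^{-1/\alpha})+\exp(-c'\lambda_m)\to 0$, which is~\eqref{eq:youshallnotpass}. No interaction with $\log N$ is even needed here because the available slack between $B$ and $A$ makes $\lambda_n$ and $\lambda_m$ differ multiplicatively; the logarithm only mattered in Corollary~\ref{cor:first} where consecutive $\lambda_i$ were being compared.
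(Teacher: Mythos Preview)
Your approach is essentially the paper's: split at a threshold between $\lambda_n$ and $\lambda_m$, bound the first term by Lemma~\ref{lem:jumper}, and bound $\Pr(X_m\le\tau)$ by a Poisson tail inequality; the only differences are cosmetic --- you take the arithmetic mean $\tfrac12(\lambda_m+\lambda_n)$ where the paper takes the geometric mean $\sqrt{\lambda_m\lambda_n}$, and you use a Chernoff bound where the paper uses Chebyshev, both of which work. One arithmetic slip to fix: the exponent $1/\alpha - 2(\alpha+1)/(\alpha(\alpha+2))$ equals $-1/(\alpha+2)$, not $-1/\alpha$ (this matches the paper's rate $O(N^{-1/(\alpha+2)})$ for the jumper term); it is still negative, so your conclusion is unaffected.
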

\begin{proof}
For any threshold $\tau$,
\begin{align}\label{eq:nopassterms}
\Pr\Bigl( &\max_{i>n}X_i\ge X_m\Bigr)
\le \Pr\Bigl(\max_{i>n}X_i>\tau\Bigr)+\Pr(X_m\le \tau).
\end{align}
The threshold we choose
is $\tau=\sqrt{\lambda_m\lambda_n}$
where $\lambda_i=\e(X_i)=Ni^{-\alpha}$.

Write $n = (A_NN/\log(N))^{1/(\alpha+2)}$
and $m= (B_NN/\log(N))^{1/(\alpha+2)}$
for $0<B_N< B <A_N<A<A(\alpha)$.
Then 
$\tau=\sqrt{\lambda_m\lambda_n}
= N(C_NN/\log(N))^{-\alpha/(\alpha+2)}$ 
where $C_N=\sqrt{A_NB_N}$.
Therefore
$$\tau= O\bigl( N^{{2}/(\alpha+2)}(\log(N))^{\alpha/(\alpha+2)}\bigr).$$

By construction,  $\tau> \lambda_n$ and so
by Lemma~\ref{lem:jumper} 
\begin{align*}
\Pr\Bigl(\max_{i>n}X_i>\tau\Bigr)&\le
\frac{N^{1/\alpha}}\alpha
\frac{\tau+1}{\tau+1-\lambda_n}\frac{{\tau}^{-1/\alpha}}{\tau-1/\alpha}.
\end{align*}
Because $\lambda_n/\tau = (B_N/A_N)^{\alpha/(2\alpha+4)}$,
we have $(\tau+1)/(\tau+1-\lambda_n)= O(1)$.
Therefore
\begin{align*}
\Pr\Bigl(\max_{i>n}X_i>\tau\Bigr)&
= O(N^{1/\alpha}\tau^{-1/\alpha-1})
= O\bigl(N^{-1/(\alpha+2)}(\log(N))^{(\alpha+1)/(\alpha+2)}\bigr)
\end{align*}
and so the first term in~\eqref{eq:nopassterms} tends to $0$
as $N\to\infty$.

For the second term in~\eqref{eq:nopassterms}, 
notice that $X_m$ has mean $\lambda_m> \tau$ and standard
deviation $\sqrt{\lambda_m}$.  
Letting $\rho = \alpha/(\alpha+2)$ and applying
Chebychev's inequality, we find that
\begin{align*}
\Pr(X_m\le \tau) & \le \frac{\lambda_m}{(\tau-\lambda_m)^2}\\
&= \frac{B_N^\rho}{(B_N^\rho-C_N^\rho)^2}N^{-2/(\alpha+2)}(\log(N))^{-\rho}\\
&\le \frac{1}{(A^{\rho/2}-B^{\rho/2})^2}N^{-2/(\alpha+2)}(\log(N))^{-\rho}\\
&\to 0
\end{align*}
as $N\to\infty$.
\end{proof}

Lemma~\ref{lem:youshallnotpass} is sharp enough for
our purposes.  A somewhat longer argument
in \cite{dyer:2010} shows that the interloper
phenomenon is ruled out even deeper into
the tail of the Zipf-Poisson ensemble.
Specifically, if $m\le(BN)^{\beta}$
and $n\ge (AN)^\beta$ for $0<B<A$ and $\beta<1/\alpha$,
then \eqref{eq:youshallnotpass} still holds.

\subsection{Limit to correct ordering, equation~\eqref{eq:zpok3}}

While we can get $(AN/\log(N))^{1/(\alpha+2)}$ entities properly
ordered, there is a limit to the number of correctly ordered
entities.  We cannot get above $CN^{1/(\alpha+2)}$ correctly
ordered entities, asymptotically. That is, the logarithm cannot be removed.
We begin with a lower bound on the probability
of a wrong ordering for two consecutive entities.

\begin{lemma}\label{lem:lowb}
Let $X_i$ be from the Zipf--Poisson ensemble with $\alpha>1$.
Suppose that $AN^{1/(\alpha+2)}\le i<i+1\le BN^{1/(\alpha+2)}$
where $0<A<B<\infty$.
Then for large enough $N$,
$$\Pr( X_{i+1}\ge X_i)\ge \frac13\,
\Phi\biggl(
-\alpha\,\frac{A^{\alpha/2}}{B^{\alpha+1}}
\biggr).$$
\end{lemma}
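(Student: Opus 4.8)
The plan is to decouple the two independent counts at the deterministic level $\tau=\lambda_i$ and then recombine using Teicher's lower-tail bound and a one-variable Berry--Esseen approximation. Because $\lambda_i>\lambda_{i+1}$ and $\{X_i\le\lambda_i\}\cap\{X_{i+1}\ge\lambda_i\}\subseteq\{X_{i+1}\ge X_i\}$, independence gives
\[
\Pr(X_{i+1}\ge X_i)\ge \Pr(X_i\le\lambda_i)\,\Pr(X_{i+1}\ge\lambda_i).
\]
The first factor is exactly the regime of Teicher's inequality~\eqref{eq:teichers}, so $\Pr(X_i\le\lambda_i)\ge e^{-1}$, a constant strictly larger than $\tfrac13$; the surplus $e^{-1}-\tfrac13>0$ is held in reserve to absorb approximation error later.

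For the second factor I would standardize $X_{i+1}$ by its own mean $\lambda_{i+1}$ and standard deviation $\sqrt{\lambda_{i+1}}$ and apply the Poisson Berry--Esseen bound~\eqref{eq:mich}, obtaining
\[
\Pr(X_{i+1}\ge\lambda_i)\ge \Phi\!\left(-\frac{\lambda_i-\lambda_{i+1}}{\sqrt{\lambda_{i+1}}}\right)-\frac{0.8}{\sqrt{\lambda_{i+1}}}.
\]
Since $i+1\le BN^{1/(\alpha+2)}$ forces $\lambda_{i+1}\asymp N^{2/(\alpha+2)}\to\infty$ uniformly over the admissible indices, the additive remainder is $o(1)$ and stays negligible next to the leading $\Phi$ term, which remains bounded away from $0$.

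The decisive step is to evaluate the standardized mean gap $(\lambda_i-\lambda_{i+1})/\sqrt{\lambda_{i+1}}$ against the two ends of the index window. The mean value theorem applied to $x\mapsto x^{-\alpha}$ writes $\lambda_i-\lambda_{i+1}=N\alpha\,\xi^{-\alpha-1}$ for some $\xi\in(i,i+1)$, while $\sqrt{\lambda_{i+1}}=\sqrt N\,(i+1)^{-\alpha/2}$. Feeding $i+1\le BN^{1/(\alpha+2)}$ into the numerator and $AN^{1/(\alpha+2)}\le i$ into the denominator makes the powers of $N$ cancel and reduces the ratio to the constant $\alpha A^{\alpha/2}/B^{\alpha+1}$, the argument of $\Phi$ in the statement. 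This is the step I expect to be the main obstacle: the exponents of $N$ must cancel to $0$, and the placement of $A$ and $B$ (which endpoint controls the gap and which controls the variance) must be tracked with care, since a misallocation flips the constant to the weaker $\alpha B^{\alpha/2}/A^{\alpha+1}$.

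Assembling the pieces, for all large $N$,
\[
\Pr(X_{i+1}\ge X_i)\ge e^{-1}\Bigl(\Phi\bigl(-\alpha A^{\alpha/2}/B^{\alpha+1}\bigr)-o(1)\Bigr),
\]
and because $\Phi(-\alpha A^{\alpha/2}/B^{\alpha+1})$ is a fixed positive number while $e^{-1}>\tfrac13$, the reserved surplus converts this into the clean bound $\tfrac13\,\Phi(-\alpha A^{\alpha/2}/B^{\alpha+1})$ as soon as $N$ is large enough that the $o(1)$ term is dominated. The two points needing detailed verification are the uniformity of the $o(1)$ Berry--Esseen estimate over all admissible $i$ and the exact constant emerging from the standardized-gap calculation.
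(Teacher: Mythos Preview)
Your argument mirrors the paper's proof step for step: decouple at the threshold $\lambda_i$, apply Teicher's inequality~\eqref{eq:teichers} to get $\Pr(X_i\le\lambda_i)\ge e^{-1}$, apply the Berry--Esseen bound~\eqref{eq:mich} to $\Pr(X_{i+1}>\lambda_i)$, control the standardized gap $(\lambda_i-\lambda_{i+1})/\sqrt{\lambda_{i+1}}$ via the mean value theorem on $x\mapsto x^{-\alpha}$ together with the endpoints $AN^{1/(\alpha+2)}$ and $BN^{1/(\alpha+2)}$, and finally trade the surplus $e^{-1}-\tfrac13$ against the vanishing remainder $0.8/\sqrt{\lambda_{i+1}}$. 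Your flagged concern about the $A$/$B$ allocation in the gap estimate is precisely the delicate bookkeeping step, and the paper treats it in the same way you describe.
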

\begin{proof}
First
$
\Pr(X_{i+1}\ge X_i)\ge \Pr( X_{i+1} > \lambda_i)\Pr(X_i\le \lambda_i)
\ge \Pr(X_{i+1}>\lambda_i)/e
$
using Teicher's inequality~\eqref{eq:teichers}.
Next
$$\Pr(X_{i+1} > \lambda_i)=1-\Pr( X_{i+1}\le\lambda_i)
\ge \Phi\biggl(\frac{\lambda_{i+1}-\lambda_i}{\sqrt{\lambda_{i+1}}}\biggr)
-\frac{0.8}{\sqrt{\lambda_{i+1}}}.
$$
Now, 
\begin{align*}
\frac{\lambda_{i+1}-\lambda_i}{\sqrt{\lambda_{i+1}}}
=\sqrt{N}\,\frac{(i+1)^{-\alpha}-i^{-\alpha}}{\sqrt{(i+1)^{-\alpha}}}
=-\alpha\sqrt{N}\,\frac{(i+\eta)^{-\alpha-1}}{\sqrt{(i+1)^{-\alpha}}}
\end{align*}
for some $\eta\in(0,1)$.
Applying the bounds on $i$,
\begin{align*}
\frac{\lambda_{i+1}-\lambda_i}{\sqrt{\lambda_{i+1}}}
&\ge -\alpha\sqrt{N}
\frac{(N^{1/(\alpha+2)}A)^{\alpha/2}}{(N^{1/(\alpha+2)}B)^{\alpha+1}}
= -\alpha
\frac{A^{\alpha/2}}{B^{\alpha+1}}.
\end{align*}

Finally, letting $N\to\infty$ we have $\lambda_{i+1}\to\infty$
and so $0.8/\sqrt{\lambda_{i+1}}$ is eventually
smaller than $(1-e/3)\Phi(-\alpha A^{\alpha/2}B^{-\alpha-1})$.
Letting $\theta = -\alpha A^{\alpha/2}B^{-\alpha-1}$
we have, for large enough $N$,
\begin{align*}
\Pr( X_{i+1}\ge X_i) 
\ge \Bigl(\Phi(\theta)-\Bigl(1-\frac{e}3\Bigr)\Phi(\theta)\Bigr)\frac1e
= \frac13\Phi(\theta).
\qquad 
\qedhere 
\end{align*}
\end{proof}

To complete the proof of Theorem~\ref{thm:zpok}
we establish equation~\eqref{eq:zpok3}.
For $n$ beyond a multiple of $N^{1/(\alpha+2)}$,
the reverse orderings predicted by Lemma~\ref{lem:lowb}
cannot be avoided.

\begin{corollary}\label{cor:second}
Let $X_i$ be sampled from the Zipf--Poisson ensemble.
Suppose that $n=n(N)$ satisfies
$n\ge CN^{1/(\alpha+2)}$ for $0<C<\infty$. 
Then
$$\lim_{N\to\infty}\Pr( X_1>X_2>\dots>X_n)= 0.$$
\end{corollary}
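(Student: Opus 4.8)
The plan is to derive the corollary from Lemma~\ref{lem:lowb} by exhibiting many pairwise transpositions that are each reasonably likely, then boosting ``at least one transposition'' to probability tending to $1$ by a blocking argument that extracts an independent subcollection of such events. The key point is that the event $\{X_{i+1}\ge X_i\}$ depends only on $X_i$ and $X_{i+1}$, so events indexed by disjoint consecutive pairs $(i,i+1)$ are mutually independent.

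First I would fix $C>0$ and assume without loss of generality that $n(N)=\lceil CN^{1/(\alpha+2)}\rceil$, since enlarging $n$ only decreases $\Pr(X_1>\cdots>X_n)$. Next I would pick constants $0<A<B$ — for concreteness $A=C/2$ and $B=C$ — and look at indices $i$ with $AN^{1/(\alpha+2)}\le i<i+1\le BN^{1/(\alpha+2)}$; there are on the order of $(B-A)N^{1/(\alpha+2)}=\tfrac{C}{2}N^{1/(\alpha+2)}$ of them, all lying in $\{1,\dots,n\}$ for large $N$. From these I would extract a set $S$ of $K=K(N)$ disjoint pairs $(i_k,i_k+1)$, $k=1,\dots,K$, with $K\ge \tfrac14 C N^{1/(\alpha+2)}$ say, simply by taking every other admissible index. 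Lemma~\ref{lem:lowb} gives, for large enough $N$ and each such pair, $\Pr(X_{i_k+1}\ge X_{i_k})\ge p$ where $p=\tfrac13\Phi(-\alpha A^{\alpha/2}B^{-\alpha-1})>0$ is a fixed constant not depending on $N$.

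Now the events $E_k=\{X_{i_k+1}\ge X_{i_k}\}$ are independent because the pairs are disjoint. The ordered event $\{X_1>\cdots>X_n\}$ forces $X_{i_k}>X_{i_k+1}$ for every $k$, i.e. it is contained in $\bigcap_k E_k^c$. Hence
\begin{align*}
\Pr(X_1>\cdots>X_n)\le \Pr\Bigl(\bigcap_{k=1}^K E_k^c\Bigr)=\prod_{k=1}^K\bigl(1-\Pr(E_k)\bigr)\le (1-p)^K\le (1-p)^{CN^{1/(\alpha+2)}/4},
\end{align*}
which tends to $0$ as $N\to\infty$ since $0<p<1$ and $K\to\infty$. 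This completes the proof.

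The only mildly delicate point is bookkeeping: one must check that the chosen pairs really do satisfy the hypotheses of Lemma~\ref{lem:lowb} (the indices must be integers lying in the stated real interval, and there must be $\Theta(N^{1/(\alpha+2)})$ disjoint ones), and that ``large enough $N$'' in Lemma~\ref{lem:lowb} can be taken uniformly over the $\Theta(N^{1/(\alpha+2)})$ pairs under consideration — but inspecting the proof of Lemma~\ref{lem:lowb}, the threshold on $N$ depends only on $A$, $B$, $\alpha$ through the single condition $0.8/\sqrt{\lambda_{i+1}}<(1-e/3)\Phi(\theta)$, and $\lambda_{i+1}\ge N(BN^{1/(\alpha+2)})^{-\alpha}=B^{-\alpha}N^{2/(\alpha+2)}\to\infty$ uniformly over the block. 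So the uniformity is automatic, and this step, while needing care, is not a real obstacle.
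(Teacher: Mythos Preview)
Your proposal is correct and follows essentially the same route as the paper's proof: the paper also takes $A=C/2$, $B=C$, invokes Lemma~\ref{lem:lowb} to get a uniform lower bound $p>0$ on $\Pr(X_{i+1}\ge X_i)$ over that block, restricts to odd indices to obtain independent disjoint pairs, and bounds $\Pr(X_1>\cdots>X_n)$ by $(1-p)^{\Theta(N^{1/(\alpha+2)})}\to 0$. Your explicit remark on the uniformity of the ``large enough $N$'' threshold across the block is a worthwhile clarification that the paper leaves implicit.
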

\begin{proof}
Let $p\in(0,1)$ be a constant such that
$\Pr(X_{i+1}\ge X_i)\ge p$ holds for 
all large enough $N$
and $(C/2)N^{1/(2+\alpha)}\le i<i+1\le CN^{1/(2+\alpha)}$.
For instance Lemma~\ref{lem:lowb} shows
that $p=\Phi( -\alpha (C/2)^{\alpha/2}/C^{\alpha})/3
=\Phi( -\alpha (2C)^{-\alpha/2})/3$ is such a constant.
Then
\begin{align}\label{eq:pairs}
\Pr( X_1>X_2>\dots>X_n) \le 
\sideset{}{^*}\prod_i\Pr(X_i>X_{i+1})
\end{align}
holds where $\prod^*$ is over all odd integers
$i\in[(C/2)N^{1/(\alpha+2)},CN^{1/(\alpha+2)})$.
There are roughly $CN^{1/(\alpha+2)}/4$ odd integers
in the product.
For large enough $N$, the right side of~\eqref{eq:pairs}
is below $(1-p)^{CN^{1/(\alpha+2)}/5}\to 0$.
\end{proof}

\section{Discussion}\label{sec:discussion}

We have found that the top few entities
in a Zipf plot of counts can be expected
to be in the correct order, when their
frequencies are measured with Poisson errors.
Even in the idealized Zipf setting, the
number of correctly ordered entities
grows fairly slowly with $N$.

Our transition point is at
$n'= (\alpha^2(\alpha+2)N/(4\log(N))^{1/(\alpha+2)}$ 
and estimating $N$ from $T=\sum_iX_i$
leads to the estimate
$$
\hat n = \biggl(\frac{\alpha^2(\alpha+2)T/\zeta(\alpha)}
{4\log(T/\zeta(\alpha))}
\biggr)^{\frac1{\alpha+2}}.
$$

The threshold $n'$ uses some slightly
conservative estimates to get a rate in $N$.
For the Zipf--Poisson ensemble with $N=10^7$
and $\alpha=1.106$
we can use \eqref{eq:topn} of Lemma~\ref{lem:topn}
directly to  find
\begin{align*}
1-\Pr(X_1>X_2>\cdots>X_{72}) \le 
\sum_{i=1}^{71}\exp(-N(i^{-\alpha/2}-(i+1)^{-\alpha/2})^2)\doteq 0.0199.
\end{align*}
We get a bound of $1$\% by taking $n=70$
and a bound of $5$\% by taking $n=76$.
The formula for $\hat n$ comes remarkably
close to what we get working directly
with equation~\eqref{eq:topn}.

The Skellam bounds do not assume a Zipf rate
for the Poisson means. Therefore we can use
them to generalize the computation above.
For example, with a Zipf--Mandelbrot--Poisson ensemble
having $X_i\sim \poi(N(i+k)^{-\alpha})$
we can still apply 
equation~\eqref{eq:topn} 
to show that the probability of an error
among the first $n$ ranks is at most
\begin{align}\label{eq:pofnNalphak}
p(n;N,\alpha,k)=\sum_{i=1}^{n-1}\exp\Bigl( -N\Bigl( (i+k)^{-\alpha/2}
-(i+k+1)^{-\alpha/2}\Bigr)^2\Bigr).
\end{align}
A conservative estimate of the number of correct
positions in the Zipf--Mandelbrot--Poisson ensemble is
\begin{align}\label{eq:pickn}
n' = \max\{ n\ge 1\mid p(n;N,\alpha,k)\le 0.01\}
\end{align}
with $n'=0$ if $p(1;N,\alpha,k)>0.01$.
We can estimate $N$ by 
$T/\zeta(\alpha,k-1)$
where $T=\sum_iX_i$ and
$\zeta(\alpha,h)=\sum_{\ell=0}^\infty (\ell + h)^\alpha$
is the Hurwitz zeta function.

Equation~\eqref{eq:pickn} is conservative
because it stems from the Bonferroni inequality,
and does not adjust for two or more order
relations being violated.
It will be less conservative for small
target probabilities like $0.01$ than for large ones
where adjustments are relatively more important.

Our focus is on the ranks that are correctly
estimated. Methods to estimate parameters
of the Zipf distribution or Zipf--Mandelbrot
distribution typically use values of $X_i$
for $i$ much larger than the number
of correctly identified items. It is not unreasonable
to do so, because ordering errors
tend to distribute the values of $X_i$
both above and below the parametric trend line.

A small number of correct unique words
can correspond to a reasonably large
fraction of word usage.  The BNC is 
roughly $6.2$\% `the' and the top $72$
words comprise about $45.3$\% of the
corpus.

For large $N$, the top $n_\epsilon=N^{1/(\alpha+2)-\epsilon}$
entities get properly ordered with very high probability
for $0<\epsilon<1/(\alpha+2)$.
The tail beyond $n_\epsilon$ accounts
for a proportion of
data close to $\zeta(\alpha)^{-1}\int_{n_\epsilon}^\infty
x^{-\alpha}\rd x
= O(n_\epsilon^{-\alpha+1})
= O(N^{(1-\alpha)/(\alpha+2)+\epsilon'})
$
for $\epsilon' = \epsilon(\alpha-1)$.
Taking small $\epsilon$ and recalling
that $\alpha>1$ we find that the fraction
of data from improperly ordered entities
vanishes in the Zipf--Poisson ensemble.
When $\alpha$ is just
barely larger than $1$ the rate may be slow.

\section*{Acknowledgments}

This work was supported by the U.S.\ National
Science Foundation under grant DMS-0906056 and by a
National Science Foundation Graduate Research Fellowship.

\bibliographystyle{plainnat}
\bibliography{transpo}

\end{document}